\begin{document}
\frontmatter          % for the preliminaries
\pagestyle{headings}  % switches on printing of running heads

\mainmatter              % start of the contributions
\title{On Simplex Pivoting Rules\\
 and Complexity Theory}
\titlerunning{Hamiltonian Mechanics}  % abbreviated title (for running head)
%                                     also used for the TOC unless
%
%\author{Ilan Adler\inst{1} \and Christos Papadimitriou\inst{2}
%\and  Aviad Rubinstein\inst{2}
\author{Ilan Adler \and Christos Papadimitriou\thanks{The research of Christos Papadimitriou and Aviad Rubinstein is supported by NSF Grant CCF-0964033.}
\and  Aviad Rubinstein$^\star$
}
\authorrunning{Ilan Adler et al.} % abbreviated author list (for running head)
%
%%%% list of authors for the TOC (use if author list has to be modified)
%\tocauthor{Ivar Ekeland, Roger Temam, Jeffrey Dean, David Grove,
%Craig Chambers, Kim B. Bruce, and Elisa Bertino}
%
\institute{ University of California, Berkeley CA 94720, USA.\\}
%\email{I.Ekeland@princeton.edu},\\ WWW home page:
%\texttt{http://users/\homedir iekeland/web/welcome.html}
%\and
%EECS department, University of California, Berkeley CA 94720, USA,\\}

\maketitle              % typeset the title of the contribution

\begin{abstract}
 We show that there are simplex pivoting rules for which it is PSPACE-complete to tell if a particular basis will appear on the algorithm's path.
  Such rules cannot be the basis of a strongly polynomial algorithm, unless P = PSPACE.  We conjecture that the same can be shown for most known variants of
  the simplex method.
   However, we also point out that Dantzig's shadow vertex algorithm has a polynomial path problem.  Finally, we discuss in the same context randomized pivoting rules.
\keywords{linear programming, the simplex method, computational complexity}
\end{abstract}
\newpage
\pagenumbering{arabic}
\section{Introduction}
Linear programming was famously solved in the late 1940s by Dantzig's simplex method  \cite{GBD_LP}; however, many variants of the simplex method were eventually proved to have exponential worst-case performance \cite{KM}, while, around the same time, Karp's 1972 paper on NP-completeness \cite{RK} mentions linear programming as a rare problem in NP which resists classification as either NP-complete or polynomial-time solvable.  Khachiyan's ellipsoid algorithm \cite{LK} resolved positively this open question in 1979, but was broadly perceived as a poor competitor to the simplex method.  Not long after that, Karmarkar's interior point algorithm \cite{NK} provided a practically viable polynomial alternative to the simplex method. However, there was still a sense of dissatisfaction in the community:   The number of iterations of both the ellipsoid algorithm and the interior point method depend not just on the dimensions of the problem (the number of variables $d$ and the number of inequalities $n$) but also on the number of bits needed to represent the numbers in the input; such algorithms are sometimes called ``weakly polynomial''.

A {\em strongly polynomial algorithm}
%\footnote{Often termed ``real good algorithm'' in the optimization community.}
for linear programming (or any problem whose input is an array of integers) is one that is a polynomial-time algorithm in the ordinary sense (always stops within a number of steps that is polynomial in the total number of bits in the input), but it also takes a number of elementary arithmetic operations that is polynomial in the dimension of the input array.   Strongly polynomial algorithms exist for many network-related special cases of linear programming, as was first shown in \cite{EK}.  This was extended by Tardos \cite{ET} who established the existence of such an algorithm for ``combinatorial'' linear programs, that is, linear programs whose constraint matrix is 0-1 (or, more generally,  contains integers that are at most exponentially large in the dimensions).  However, no strongly polynomial algorithm is known for general linear programming.

The following summarizes one of the most important open problems in optimization and the theory of algorithms and complexity:

\begin{conjecture}
\label{con:Simplex_strongly_poly}
There is a strongly polynomial algorithm for linear programming.
\end{conjecture}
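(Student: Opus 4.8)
The plan is to attack Conjecture~\ref{con:Simplex_strongly_poly} not through a combinatorial pivoting rule for the simplex method — for which, as the results developed below suggest, a polynomial analysis is unlikely unless $\mathrm{P}=\mathrm{PSPACE}$ — but through a pivoting rule whose path is controlled \emph{geometrically} rather than combinatorially, namely Dantzig's shadow vertex rule, which (again, as noted below) does have a polynomial path problem. First I would try to bound the expected number of pivots of the shadow vertex method over a suitable random two-dimensional projection, in the spirit of the smoothed-analysis and Dadush--Huiberts line of work, aiming for a bound that depends only on $n$, $d$, and a \emph{scale-invariant} condition measure of the constraint matrix, and not on the bit length of its entries. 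The crucial intermediate object is the condition number $\bar\chi_A$ of the constraint matrix (and its scaling-optimized variant $\bar\chi_A^*$, obtained by minimizing over positive diagonal column rescalings): if the number of pivots can be bounded by a polynomial in $n$, $d$, and $\log\bar\chi_A^*$, then the entire program reduces to controlling $\bar\chi_A^*$.

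Second, I would carry out that control by a ``variable fixing'' / proximity argument of the kind Tardos used for combinatorial linear programs and that has since been ported to the real-number model in the layered-least-squares interior-point framework: solve a perturbed or projected LP approximately, invoke an LP proximity theorem to certify that some variable is zero (or some inequality is tight) in an optimal solution of the original instance, fix it, and recurse. The intended output of this step is that after $O(nd)$ strongly polynomial stages one either reaches an optimal basis or makes controlled progress, so that the total work is strongly polynomial. Packaging the two steps, a strongly polynomial LP algorithm would follow from (a) a shadow-path bound polynomial in $n,d,\log\bar\chi_A^*$, together with (b) a strongly polynomial reduction of an arbitrary rational LP to a sequence of $\mathrm{poly}(n,d)$ sub-instances on which $\log\bar\chi_A^*$ is itself polynomially bounded.

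The hard part — and the reason this is the central open problem of the area rather than a lemma admitting a two-page proof — is precisely link~(b). The parameter $\bar\chi_A^*$ can be doubly exponential in the encoding length of $A$, and every known device for taming it (coordinate rounding, rescaling heuristics, Gaussian elimination with pivot selection) either reintroduces a dependence on the bit length or succeeds only for structured matrices such as the network and ``combinatorial'' cases already covered by Edmonds--Karp and Tardos. I therefore expect that any genuine proof will have to either (i) exhibit such a strongly polynomial condition-number reduction, or (ii) dispense with $\bar\chi_A^*$ altogether in favour of a purely combinatorial invariant — for instance, combine a resolution of the polynomial Hirsch conjecture (a polynomial bound on the graph diameter of polyhedra) with a constructive, strongly polynomial procedure for \emph{following} a short path between vertices — and prove the corresponding diameter bound. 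Until one of these two breakthroughs is in hand, the steps above constitute a research program, not a proof; accordingly I state the target as a conjecture rather than a theorem.
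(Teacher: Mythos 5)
This statement is not proved in the paper at all: it is Conjecture~\ref{con:Simplex_strongly_poly}, stated as a major open problem, and the paper's actual results (PSPACE-completeness of the path problem for a constructed pivoting rule, tractability of the path problem for the shadow vertex rule) are offered as a complexity-theoretic lens on the conjecture, not as progress toward proving it. Your proposal, as you yourself concede in its last sentence, is likewise not a proof but a research program, so there is nothing here that establishes the statement.

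The concrete gap is your link~(b). Granting (a) — a shadow-path bound polynomial in $n$, $d$, and $\log\bar\chi_A^*$ (results of this flavour exist in the condition-number and smoothed-analysis literature) — the resulting algorithm is strongly polynomial only on instance classes where $\log\bar\chi_A^*$ is polynomially bounded in $n$ and $d$, which essentially recovers the Edmonds--Karp/Tardos ``combinatorial'' regime already cited in the introduction; in general $\log\bar\chi_A^*$ grows with the bit length of $A$, so the method is weakly polynomial, exactly the deficiency the conjecture asks to remove. Your step (b), a strongly polynomial reduction of an arbitrary LP to polynomially many sub-instances with controlled $\bar\chi_A^*$, is not a missing lemma one could supply with more care: it is equivalent in difficulty to the conjecture itself, and your alternative route (ii) requires resolving the polynomial Hirsch conjecture plus a constructive path-following procedure, which the paper explicitly notes is open and, in any case, would only constrain primal rules. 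One smaller caution: tractability of the shadow vertex rule's path problem (Theorem~\ref{thm:shadow_is_tractable}) only removes the PSPACE-hardness obstruction of Proposition~\ref{prop:strongly_poly_intractable_PSPACE=P}; it is in no sense evidence of a polynomial pivot bound — the paper points out that this rule is exponential in the worst case \cite{KGM}. So the proposal correctly identifies where the difficulty lives, but it does not close it, and the statement remains a conjecture.
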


One particularly attractive direction for a positive answer for this conjecture is the search for polynomial variants of the simplex method.  It would be wonderful to discover a pivoting rule for the simplex method which (unlike all known such methods) always finds the optimum after a number of iterations that is polynomial in $d$ and $n$.   Hence the following is an interesting speculation:

\begin{conjecture}
\label{con:Simplex_strongly_poly_rand}
There is a pivoting rule for the simplex method that terminates after a number of iterations that is, in expectation, polynomial in $d$ and $n$.
\end{conjecture}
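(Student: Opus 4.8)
\medskip
\noindent\textbf{Towards Conjecture~\ref{con:Simplex_strongly_poly_rand}.}
The plan is to exhibit a concrete randomized pivoting rule and to bound its expected number of pivots. The natural first candidate is Kalai's \emph{Random-Facet} rule (equivalently the Matou\v{s}ek--Sharir--Welzl rule for LP-type problems): from the current vertex, pick uniformly at random a facet of the current polyhedron that does not pass through that vertex, recurse inside that facet, and when the recursion returns a vertex, perform one further pivot if that vertex is not yet optimal for the full program. I would carry out the analysis in the abstract combinatorial setting of acyclic unique-sink orientations (AUSO) of the polytope graph, since every (non-degenerate, or perturbed) linear program induces such an orientation via the objective and the known analyses are cleanest there; a polynomial bound in that setting would immediately give the conjecture.

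The key steps are as follows. First, set up the standard recursion for the expected number of pivots: conditioning on which facet is deleted, and using that after returning from the recursion the current vertex is already optimal among the facets examined so far (so the remaining work telescopes), one obtains a recurrence of the schematic form $f(d,n)\le f(d-1,n-1)+1+\tfrac1d\sum_{k=1}^{d} f(k-1,n-1)$. Second, solve the recurrence; Kalai's solution yields $\exp\!\big(O(\sqrt{d\log n})\big)$ pivots in expectation --- subexponential, but short of polynomial. Third --- and this is where a new idea is needed --- replace uniform sampling by a better-informed randomization (for instance sampling facets with probabilities biased by the objective gradient, in the spirit of a randomized shadow-vertex rule, whose path problem is polynomial as shown here), or else feed additional structure of the constraint matrix into the recursion so as to collapse the subexponential term to a polynomial.

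The hard part, and the reason this statement is posed as a conjecture, is the \emph{subexponential barrier}. Matou\v{s}ek, and later Friedmann, Hansen and Zwick, constructed AUSOs --- and, in the latter case, genuine linear programs arising from parity and mean-payoff games --- on which Random-Facet, Random-Edge and several relatives take $\exp\!\big(\Omega(n^{c})\big)$ pivots in expectation for some constant $c>0$. Consequently, no purely combinatorial analysis of these particular rules can establish Conjecture~\ref{con:Simplex_strongly_poly_rand}: any successful proof must either propose a genuinely new rule that provably evades every such construction, or abandon the worst-case AUSO abstraction and argue directly about the geometry of real polyhedra --- e.g. by showing that the pathological orientations used in the lower bounds cannot be realized by linear inequalities together with the relevant objective, so that the expected length of the pivot walk on an actual LP stays polynomial. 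I expect this reconciliation between the combinatorial lower bounds and genuine LP geometry to be the crux; the outline above should be read as a roadmap rather than a complete argument.
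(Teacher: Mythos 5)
You should note first that the statement you are addressing is not a theorem of the paper at all: Conjecture~\ref{con:Simplex_strongly_poly_rand} is stated as an open problem, the paper offers no proof of it, and indeed the paper's whole point is to study this question indirectly (via the complexity of path problems of pivoting rules) precisely because no one knows how to resolve it. So there is no ``paper proof'' for your argument to be compared with, and your text cannot be judged as an alternative derivation of an established result.

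As a proof attempt, your proposal has a genuine and decisive gap, which you yourself flag: the entire argument hinges on the third step, where the subexponential $\exp\bigl(O(\sqrt{d\log n})\bigr)$ bound for Random-Facet is supposed to be ``collapsed'' to a polynomial by biased facet sampling or by exploiting structure of the constraint matrix. No mechanism for this collapse is given, and none is known; the recurrence you set up is exactly Kalai's and cannot by itself yield anything better than the subexponential solution. Moreover, the lower bounds you cite (Matou\v{s}ek's AUSO constructions and the Friedmann--Hansen--Zwick linear programs) show that the specific rules your roadmap starts from --- Random-Facet and Random-Edge --- provably do \emph{not} satisfy the conjecture, so the first two steps of the plan analyze rules that are already excluded, and the burden falls entirely on the unspecified new rule or new geometric argument. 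What you have written is a sensible survey of the state of the art and of where a proof would have to go (consistent with the paper's own remarks citing the subexponential bounds of Kalai and Matou\v{s}ek--Sharir--Welzl), but it establishes nothing toward the conjecture itself, and it should not be presented as a proof, even a partial one.
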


In relation to Conjecture \ref{con:Simplex_strongly_poly_rand}, clever randomized pivoting rules of a particular recursive sort were discovered rather
 recently, with worst-case number of iterations that has a subexponential dependence on $d$ \cite{GK,MSW}.
 On another front, the interesting polynomial simplex-like algorithm of Kelner and Spielman \cite{KS} does not settle Conjecture \ref{con:Simplex_strongly_poly_rand}
 because it is weakly polynomial, as the complexity of each iteration, and the number of iterations, depends (polynomially of course) on the bits of the integers in the input.
 Other recent results related to Conjecture \ref{con:Simplex_strongly_poly} can be found in \cite{Ch,YYY}.

In the next section we formalize the concept of a {\em pivoting rule:}  A method for jumping from one basic solution to an adjacent one that (1) is
strongly polynomial per iteration; (2) is guaranteed to increase a potential function at each step; and (3) is guaranteed to always terminate at
the optimum (or certify infeasibility or unboundedness).  We also give  several examples of such rules.   It is important to note that in our
definition we allow pivoting rules to jump to {\em infeasible bases} in order to include pivoting rules other than of the primal type.  Also, our original definition in Section 2 restricts pivoting rules to be deterministic; we discuss the important subject of randomized rules in Section 5.

Recently there has been a glimmer of hope that some stronger forms of the two conjectures
%Conjecture \ref{con:Simplex_strongly_poly}
 could be disproved, after the disproof of the Hirsch Conjecture \cite{SF}.  The Hirsch conjecture \cite{GBD} posited that the diameter of a $d$-dimensional polytope with $n$ facets is at most $n-d$, the largest known lower bound.  The best known upper bound for this diameter is the quasi-polynomial bounds of \cite{KK}.   But even a super-polynomial lower bound would only falsify
% Conjecture \ref{con:Simplex_strongly_poly}
 the conjectures for {\em primal} pivoting rules (ones going through only feasible bases, i.e., vertices of the polytope), but {\em not} for the many other kinds of pivoting rules (see the next section).  Furthermore, it is now clear that the techniques involved in the disproof of the Hirsch conjecture are incapable of establishing a nonlinear lower bound on the diameter of polytopes, and it is widely believed that there is a polynomial upper bound on the diameter of polytopes.

{\em In this paper we contemplate whether the concepts and methods of complexity theory can be applied productively to illuminate the problem of strongly polynomial algorithms for linear programming and Conjecture \ref{con:Simplex_strongly_poly}.}    We show a result suggesting that PSPACE-completeness may be relevant.

In particular, we propose to classify deterministic pivoting rules by the complexity of the following problem, which we call
 {\sc the path problem} of a pivoting rule:  Given a linear program and a basic  solution, will this latter one appear on the
  pivot rule's path? Recall that PSPACE is the class of problems solvable in polynomial {\em memory}.  This class contains NP, and it is
  strongly believed to contain it strictly.  The {\sc path problem} of a pivoting rule is clearly in PSPACE, because it can be solved by
following the (possibly exponentially long) path of  the rule, reusing space; if it is PSPACE-complete, then the pivoting rule cannot be polynomial (unless, of course, P = PSPACE).

But it is not a priori clear that  there are pivoting rules for which the path problem is PSPACE-complete.  We show (Theorem \ref{thm:Simple_is_PSPACE_hard}) that they do exist; unfortunately, we prove this not for one of the many classical pivoting rules, but for a new, explicitly constructed --- and fairly unnatural --- one.  We conjecture that the same result holds for essentially all known deterministic pivoting rules; such a proof looks quite challenging; obviously, in such a proof much more will need to be encoded in the linear program (which, in the present proof, is of logarithmic complexity and isomorphic to $\{0,1\}^n$).  However, we do exhibit (Theorem \ref{thm:shadow_is_tractable}) a pivoting rule whose path problem is in P:  It is Dantzig's well-known {\em self-dual simplex}  \cite{GBD} (also known  as {\em shadow vertex algorithm}), which is known to be exponential in the worst case \cite{KGM}, but has been used in several sophisticated algorithmic upper bounds for linear programming, such as average-case analysis and smoothness
\cite{HB,SS,AM,AKS,MT,ST}.   We briefly discuss the apparent connection between the average-case performance of a pivoting rule and the complexity of its path problem.

The motivation for our approach came from recent results establishing that it is PSPACE-complete to compute the final result of certain well known algorithms for finding fix points and equilibria \cite{GPS}.  However, the proof techniques used here are completely different from those in \cite{GPS}.

\section{Definitions}
Consider an algorithm whose input is an array of $n$ integers.  The algorithm is called {\em strongly polynomial} if
\begin{itemize}
\item it is polynomial-time as a Turing machine, and
\item if one assumes that all elementary arithmetic operations have cost one, the worst-case complexity of the algorithm is  bounded by a polynomial in $n$, and is therefore independent of the size of the input integers.
\end{itemize}

In linear programming one seeks to maximize $c^Tx$ subject to $Ax = b, x \geq 0$, where $A$ is $m \times n$.  An $m\times m$ nonsingular submatrix $B$ of $A$ is a
{\em basis}.  A {\em feasible basis} $B$ is one for which the system $Bx_B = b$ (where by $x_B$ we denote vector $x$ restricted to the coordinates
 that correspond to $B$) has a nonnegative solution; in this case, $x_B$ is called a {\em basic feasible solution.}  Basic feasible solutions
  are important because they render linear programming a combinatorial problem, in that the optimum, if it exists,  occurs at one of them.
   We say that two bases are {\em adjacent} if they differ in only one column.

There are many versions of linear programming (with inequality constraints, minimization, unrestricted in sign variables, etc.) but they are all known to be easily interreducible.  We shall feel free to express linear programs in the most convenient of these.

We shall assume that the linear programs under consideration are
non-degenerate (no two bases result in the same basic  solution).  Detecting this condition is nontrivial (NP-hard, as it turns out).
However, there are several reasons why this very convenient assumption is inconsequential.  First, a random perturbation of a linear program (obtained, say, by adding a random small vector to $b$) is non-degenerate with probability one.  And second, simplex-like algorithms can typically be modified to essentially perform (deterministic versions of) this perturbation on-line, thus dealing with degeneracy.

We next define a class of algorithms for linear programming that are variants of the simplex method, what we call {\em pivoting rules}.  To start, we recall from linear programming theory three important kinds of bases $B$, called {\em terminal bases}:

\begin{itemize}
\item  optimality: $B^{-1}b \geq 0, c^T -c_B^TB^{-1}A\leq 0$.  $B$ is the optimal  feasible basis of the linear program.
\item unboundedness: $B^{-1}A_j \leq 0, c_j-c_B^TB^{-1}A_j>0$ for some column $A_j$ of $A$.  This implies that the linear program is unbounded if feasible.
\item infeasibility: $(B^{-1})_iA\geq 0, (B^{-1})_ib<0$ for some row $(B^{-1})_i$ of $B^{-1}$.  This means the linear program is infeasible.
\end{itemize}
Notice that, given a basis, it can be decided in strongly polynomial time whether it is terminal (and of which kind).

\begin{definition} A {\em pivoting rule} $R$ is a strongly polynomial algorithm which, given a linear program $(A,b,c)$:
\begin{itemize}
\item produces an initial basis $B_0$;
\item given in addition a basis $B$ that is not terminal, it produces an adjacent basis $n_R(B)$ such that $\phi_R(n_R(B))>\phi_R(B)$, where $\phi_R$ is a potential function.
\end{itemize}

The {\em path} of pivoting rule $R$ for the linear program $(A,b,c)$ is the sequence of bases $(B_0, n_R(B_0), n^2_R(B_0),\ldots,, n_R^k(B_0))$, ending at a terminal basis, produced by $R$.
\end{definition}

Obviously, any pivoting rule constitutes a correct algorithm for linear programming, since it will terminate (by monotonicity and finiteness), and can only terminate at a terminal basis.  Notice that pivoting rules may pass through infeasible basic solutions (for example, they can start with one).  Incidentally, the inclusion of infeasible bases implies that such rules operate not on the linear program's polytope, but on its {\em linear arrangement}.  Since the latter has diameter $O(mn)$, even the existence of polytopes with super-polynomial diameter will not rule out strongly polynomial pivoting rules.

There are many known deterministic pivoting rules (ties are broken lexicographically, say):
\begin{enumerate}
\item  {\bf Dantzig's rule (steepest descent).}  In this rule (as well as in all other primal rules that follow), given a feasible basis $B$ we first calculate,
 for each index $j$ not in the basis the objective increase gradient $c_j^B=c_j-c^T_BB^{-1}A_j$.
    Define $J(B)=\{j: c^B_j>0\}$.  Dantzig's rule selects the $j\in J(B)$ with largest $c^B_j$ and brings it in the basis.
    By non-degeneracy (if not a terminal basis), this completely determines the next basis.  As with all primal pivoting rules, the potential function $\phi_R$ is the objective.

\item  {\bf Steepest edge rule.}  Instead of the maximum $c^B_j$, select the largest $c^B_j\over ||B^{-1}A_j||$.

\item  {\bf Greatest improvement rule.}  We bring in the index that results in the largest increment of the objective.

\item {\bf Bland's rule.}  Select the smallest $j\in J(B)$.

For all these rules, however, we have not specified the original basis $B_0$.  This is obviously a problem, since all these rules are primal and need feasible bases, and a feasible basis may not be a priori available.  Primal pivoting rules such as these are best applied not on the original $m\times n$ linear program $(A,b,c)$, but to a simple $m\times 2n$ variant called ``the big $M$ version,''  defined as $(A|-A), b, (c|-M,\ldots,-M)$, where $M$ is a large number ($M$ can either be handled symbolically, or be given an appropriate value computed in strongly polynomial time).   It is trivial now to find an initial feasible basis.  In fact, the pivoting rule running on the new linear program can be thought of as a slightly modified pivoting rule acting on the original linear program (when $j\in J(B)$, $A_j$ is negated, and $c_j$ is replaced by $-M$).

\item {\bf Shadow vertex rule.}  Here $B_0$ is any basis.  Given $B_0$, we construct two vectors $c_0$ and $b_0$ such that $B_0$ is a feasible basis, and also a dual feasible basis, of the relaxed linear program $\max c_0^Tx$ subject to $Ax = b_0, x\geq 0$.  Now consider the line segment between these two linear programs, with right-hand side and objective $\lambda b + (1-\lambda)b_0$ and $\lambda c + (1-\lambda)c_0$, respectively.  Moving on this line segment from $\lambda = 0$, we have both primal-feasible and dual-feasible (and hence optimal)
    solutions.  At some point, one of the two will become infeasible (and only one, by non-degeneracy).  We find a new basic solution by exchanging variables as dictated by the violation, and continue.  The potential function is the current $\lambda$.  When $\lambda =1$ we are at the optimum.

\item {\bf Criss-cross rules.}  A class of pivoting rules outside our framework, whose first variant appeared in \cite{SZ}, goes from one (possibly infeasible) basis to the other and convergence to a terminal basis is proved through a combinatorial argument that does not involve an explicit potential function.   However, certain such rules (such as the criss-cross pivoting rule suggested in \cite{TT}) have been shown (\cite{FM}) to possess a monotone potential function, and so they can be expressed within our framework.

\item {\bf Dual pivoting rules.}  Naturally, any of the primal pivoting rules can work on the dual.

\item {\bf Primal-dual pivoting rule.}  This classical algorithm \cite{DFF} is an important tool for developing simplex-inspired combinatorial algorithms for a broad set of network problems, acting as a reduction from weighted to unweighted combinatorial problems.  It does not conform to our framework, because it involves an inner loop solving a full-fledged linear program.

\item {\bf Pivoting rules with state.} Finally, also outside our framework are pivoting rules relying on data other than $A, b, c,$ and $B$, for example a pivoting rule relying on statistics of the history of pivoting such as selecting to include the index which has in the past been selected least frequently.

\item{\bf Randomized pivoting rules.}  There are several proposed randomized pivoting rules.  The ambition here is that the rule's expected path length is polynomial.  The simplest one \cite{GBD} is to pick a random index in $J(B)$.  Another important class of randomized rules are the {\em random facet} rules used in the proofs of subexponential diameter bounds \cite{GK,KK,MSW}.  We discuss randomized pivoting rules in Section 5.

\end{enumerate}

A pivoting rule is {\em strongly polynomial} if for any linear program  the length of the path is bounded above by a polynomial in $m$ and $n$. All pivoting rules within our framework mentioned above are known {\em not} to be strongly polynomial, in that for each one of them there is an explicit family of linear programs with non-polynomial path length, see \cite{AZ} for a unifying survey.

Explicit constructions are one way of ruling out pivoting rules.  {\em But is there a complexity-theoretic way? }   Our interest was sparked by the story of a well-known pivoting rule for a problem other than linear programming:  The Lemke-Howson algorithm for two-player Nash equilibrium, discovered in the 1960s \cite{LH}.  The first explicit construction was obtained decades later \cite{SV} and was extremely complicated.  More recently, it was established that the problem of finding the Nash equilibrium discovered by the Lemke-Howson algorithm is PSPACE-complete \cite{GPS} (and thus the algorithm cannot be polynomial, as long as P $\neq$ PSPACE).  Remarkably, the PSPACE-completeness proof was much simpler than the explicit construction.  We are led to the main definition of this paper:

\begin{definition}
The {\em path problem} associated with a pivoting rule $R$ is the following:  Given a linear program and a basis $B$, does $B$ appear on the path of $R$ for this linear program?

A pivoting rule is called {\em intractable} if its path problem is PSPACE-complete.  A pivoting rule is {\em tractable} if its path problem can be solved in strongly polynomial time.%\footnote{A possible criticism of this definition is that it seems to rule out pivoting rules of intermediate complexity.  If such a natural pivoting rule ever comes up, the definition will have to be revised...}
\end{definition}

The reason why this concept may be useful in understanding the complexity of linear programming is the following straightforward result:

\begin{proposition}
\label{prop:strongly_poly_intractable_PSPACE=P}
If an intractable pivoting rule is strongly polynomial, then PSPACE = P.
\end{proposition}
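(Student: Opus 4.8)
The plan is to observe that if an intractable pivoting rule $R$ is strongly polynomial, then its path problem --- which is PSPACE-complete by hypothesis --- in fact lies in P, and that this collapse forces PSPACE $=$ P. So the whole proof reduces to exhibiting a polynomial-time Turing machine that decides the path problem of $R$.

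First I would unwind the two senses of ``strongly polynomial'' that are in play. By the definition of a pivoting rule, the computation of $B_0$ and of the successor $n_R(B)$ from $B$ is a strongly polynomial, hence in particular polynomial-time, algorithm; so each individual pivot step runs in time polynomial in the bit-length of the input $(A,b,c)$. A basis is specified by $m$ column indices, so it has bit-length $O(m\log n)$, dominated by the input size; consequently the intermediate bases produced along the path never grow super-polynomially --- a point worth recording explicitly so that the simulation below is genuinely polynomial-time as a Turing machine. Next, by the definition of a \emph{strongly polynomial pivoting rule}, the length $k$ of the path $(B_0, n_R(B_0), \ldots, n_R^k(B_0))$ is bounded by a fixed polynomial $p(m,n)$.

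I would then describe the decision procedure. On input $((A,b,c), B)$: compute $B_0$; then for at most $p(m,n)$ iterations, apply $n_R$ to the current basis, at each step testing --- in strongly polynomial time, as noted after the definition of terminal bases --- whether the current basis is terminal and whether it equals the queried basis $B$; accept if $B$ is seen before a terminal basis is reached, and reject otherwise (including if the path terminates without encountering $B$). Correctness is immediate from the definition of the path of $R$. Each of the $O(p(m,n))$ iterations costs time polynomial in the input size, and the bases stay polynomially bounded throughout, so the total running time is polynomial; hence the path problem of $R$ is in P. Since this problem is PSPACE-complete, every language in PSPACE reduces to it in polynomial time and is therefore in P, giving PSPACE $\subseteq$ P; combined with the trivial inclusion P $\subseteq$ PSPACE, this yields PSPACE $=$ P.

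There is no substantive obstacle in this argument; it is essentially the observation that ``short path $+$ cheap per-step computation $\Rightarrow$ cheap path-membership test.'' The only care required is the bookkeeping that makes the simulation polynomial-time \emph{as a Turing machine} --- i.e., that neither the number of pivots nor the bit-size of any intermediate basis can blow up --- and this is exactly what the hypothesis that $R$ is a strongly polynomial pivoting rule (in both senses above) provides.
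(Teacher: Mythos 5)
Your argument is correct and is exactly the intended one: the paper treats this proposition as immediate from the observation that a strongly polynomial path length together with strongly polynomial (hence polynomial-time) pivot steps lets one decide path membership by direct simulation, placing the PSPACE-complete path problem in P. Your extra bookkeeping about basis bit-length and termination testing is sound and only makes explicit what the paper leaves implicit.
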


But are there examples of these two categories?  This is the subject of the next two sections.
\section{An Intractable Pivoting Rule}
This section is devoted to the proof of the following theorem.

\begin{theorem}
\label{thm:Simple_is_PSPACE_hard}
There is an intractable pivoting rule $R$.
\end{theorem}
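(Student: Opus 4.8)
The plan is to encode an arbitrary polynomial-space computation into the path of an explicitly constructed pivoting rule, using the linear program only as a "counter" whose bases are in bijection with $\{0,1\}^n$. Concretely, I would take the linear program to be a trivial one — e.g., the cube $0 \le x_i \le 1$ for $i = 1,\ldots,n$ written in equational form with slacks, so that each vertex corresponds to a bit string $\sigma \in \{0,1\}^n$, and all $2^n$ vertices are feasible. Given a PSPACE-complete language $L$ decided by a Turing machine running in space $n^{O(1)}$, the standard trick is to view the machine's configuration as (roughly) $n$ bits and to have the pivoting rule, on input a basis $B$ corresponding to $\sigma$, decode $\sigma$ as a configuration, simulate one step of the machine, re-encode the next configuration as a bit string $\sigma'$, and pivot from $B$ to the adjacent-ish basis for $\sigma'$. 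The key subtlety is that our definition demands (i) the rule is strongly polynomial per step, (ii) it jumps only to an \emph{adjacent} basis, and (iii) it strictly increases a potential $\phi_R$. One step of a Turing machine changes only $O(1)$ of the $n$ configuration bits, so with a careful encoding the successor basis differs in only one column; if not, I would first pad the configuration space so that successive configurations in the \emph{padded} simulation differ in exactly one coordinate (a Gray-code-style re-encoding of the transition), which is a routine combinatorial fix. For the potential, I would use the step counter of the simulation: since the machine runs in space $s = n^{O(1)}$ it has at most $2^{O(s)}$ configurations, so I can attach to each basis a monotone $\phi_R$ equal to "number of steps simulated so far," read off from the configuration — or, more cleanly, bound the machine's running time a priori by $T = 2^{O(s)}$, hard-wire the time-step into part of the bit string, and let $\phi_R$ be that time-step.

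The path problem then becomes: given $\sigma$, does the simulation started from the encoding of the initial configuration $x$ ever reach $\sigma$? I would arrange the initial basis $B_0$ to encode the start configuration of the machine on the instance $x$ that is \emph{read off from the linear program itself} — this is why the remark in the introduction that "the linear program is of logarithmic complexity and isomorphic to $\{0,1\}^n$" matters: the $n$ in "$\{0,1\}^n$" plays the role of the input length to the PSPACE machine, and the particular instance $x$ of $L$ is encoded in, say, the objective vector $c$ or a low-order block of coordinates, using only $O(\log)$ extra bits of description of the LP relative to $n$. Then "$B$ appears on the path of $R$ for this LP" is exactly "the space-bounded computation on $x$ passes through configuration $\sigma$," and by choosing $L$ appropriately (e.g., via a generic PSPACE machine and the target configuration being an accepting halting one), this is PSPACE-hard. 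Membership in PSPACE is automatic from Proposition-style reasoning already noted in the excerpt (follow the path, reuse space), and one must also check the rule is a \emph{legal} pivoting rule: strongly polynomial per iteration (decoding, one TM step, re-encoding, and computing the adjacent basis of the cube are all trivially polynomial and number-size-independent), terminates (when the simulation halts, steer $B_0$'s successor to the unique optimal basis — e.g. make the all-ones vertex optimal and, upon halt, pivot monotonically toward it), and increases $\phi_R$ at every step.

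I expect the main obstacle to be the \emph{adjacency} constraint: a single step of a Turing machine, or even a single increment of a binary counter, does not in general change exactly one bit, whereas two bases of the cube are adjacent only if the corresponding vertices differ in one coordinate. The clean resolution is to reserve extra "scratch" coordinates and simulate each real TM step by a short subsequence of single-bit flips — a little finite automaton that shuttles through intermediate bases, each differing from the previous in one column, realizing the net change — while the potential $\phi_R$ increases along the whole subsequence (use a lexicographic potential: major component = TM step count, minor component = position within the gadget). One must then verify that \emph{off-path} bases (bit strings not reachable by any legal simulation trace) still get a well-defined, legal, monotone successor so that $R$ is total and correct as an LP algorithm — this is handled by giving every such "garbage" basis a default behavior that immediately starts walking monotonically toward the optimal vertex. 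Once these bookkeeping gadgets are in place, the reduction from a generic PSPACE machine is immediate, and Theorem~\ref{thm:Simple_is_PSPACE_hard} follows.
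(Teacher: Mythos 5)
Your high-level plan --- identify the bases of a cube-like linear program with $\{0,1\}^n$, let a custom, memoryless pivoting rule simulate a space-bounded computation in those bits, and ask whether a given string is visited --- is the same strategy as the paper's, and your handling of PSPACE membership and of adjacency (push the ``one bit per step'' requirement into the encoding; the paper does this by reducing from a circuit problem whose successor function flips exactly one bit) is fine in outline. The genuine gap is the monotonicity/termination requirement. A pivoting rule sees only $(A,b,c)$ and the current basis, and \emph{every} transition it ever makes --- simulation steps, your scratch-bit gadget walks, the ``default'' behavior on garbage bases, and the post-halt steering toward the optimum --- must strictly increase one fixed potential $\phi_R$, and the rule may stop only at a terminal basis. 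Your lexicographic potential (TM step count, position within gadget) is neither defined nor monotone on the escape walks: a garbage basis walking toward the all-ones vertex can pass through a string that decodes as a perfectly legitimate simulation state with a smaller step counter, at which point your memoryless rule resumes simulating and may walk back, so nothing yet rules out cycles; saying the default behavior ``walks monotonically toward the optimal vertex'' begs the question of monotone with respect to which single function. This is exactly the difficulty the paper's construction is designed to dissolve: it uses the Klee-Minty cube, whose objective orders \emph{all} vertices in Gray-code order (Lemma~\ref{lem:klee-minty_order}), takes $\phi_R$ to be the LP objective itself, and devotes half the coordinates to a Gray-code counter that is incremented precisely when the desired simulation flip would be objective-decreasing --- by the parity property, the same flip becomes objective-increasing one counter step later. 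That single idea gives adjacency, strict monotonicity at every vertex, acyclicity, and termination at the optimum all at once, with no garbage cases left to patch.

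A secondary difference: you encode the PSPACE instance in the LP data (e.g., the objective vector), whereas the paper keeps the LP essentially trivial (the Klee-Minty cube, carrying only the parameter $n$) and encodes the instance in the \emph{queried basis}. Your choice is legitimate, but it adds obligations you do not discharge: the rule must recover the instance from the numerical data in a strongly polynomial way, and the encoding must not destroy non-degeneracy or whatever vertex/objective structure your potential relies on. Until the single-potential/acyclicity issue is actually resolved --- for instance by tying $\phi_R$ to the LP objective as in the Klee-Minty construction, or by an explicit proof that your combined dynamics (simulation, gadgets, escapes) is acyclic and halts only at a terminal basis --- the proposal does not yet establish the theorem.
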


The PSPACE-completeness reduction is based on the Klee-Minty construction, the original explicit exponential example for a variant of the
simplex method \cite{KM}, which we recall next.

The {\em $d$-dimensional Klee-Minty cube} is the following linear program:

\begin{eqnarray*}
&\max  x_1 \\
%\hbox{\rm subject to\ }
&0\leq  x_d \leq 1 \\
&\epsilon x_{i+1} \leq   x_i \leq 1-\epsilon x_{i+1}, i = 1,\ldots, d-1\\
&x_i\geq  0, i = 1,\ldots, d
\end{eqnarray*}

The feasible region of this linear program is a distorted $d$-hypercube (it obviously describes precisely the $d$-hypercube when $\epsilon = 0$):
 A polytope whose vertices are within a radius of $\epsilon$ from those of a hypercube, and are therefore in one-to-one correspondence
 with the elements of $\{0,1\}^d$.  Thus the feasible bases will also be represented as bit strings in $\{0,1\}^d$.  The objective function
 has a minimum at $0^d$ (a string of $d$ $0$'s) and a maximum at  $10^{d-1}$.

Let us now recall a well-known order on  $\{0,1\}^d$ called {\em Gray code} and denoted $G_d$.   $G_1$ is simply the order $(0, 1)$.    Inductively, the Gray code $G_{i+1}$ is $(0G_i, 1G_i^R)$, by which we mean, the sequence $G_i$ with each bit string preceded by a $0$, followed by the {\em reverse} of the order $G_i$, this time with each bit string preceded by $1$.  If $0\leq k < 2^d$, we denote by $G_d[k]$ the $k$-th bit string in $G_d$.

$G_d$ is a bijection between $\{0,1,\ldots,2^{d-1}\}$ and $\{0,1\}^d$, and therefore we can define the {\em successor function}
$S_d: \{0,1\}^d \mapsto \{0,1\}^d$ as follows: $S_d(x) = G_d[G_d^{-1}(x)+1]$.  The following is straightforward:

\begin{lemma}
$S_d$ can be computed in polynomial time.
\end{lemma}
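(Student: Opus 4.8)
The plan is to show that both $S_d$ and its inverse are computable by a simple recursion that mirrors the inductive definition $G_{i+1} = (0G_i, 1G_i^R)$. First I would record the standard closed form: if $x = b_1 b_2 \cdots b_d \in \{0,1\}^d$, then $G_d^{-1}(x)$ is the integer whose binary representation $a_1 a_2 \cdots a_d$ satisfies $a_i = b_1 \oplus b_2 \oplus \cdots \oplus b_i$ (the prefix-XOR), and conversely $b_i = a_{i-1} \oplus a_i$. Both directions are clearly computable by a single left-to-right pass over $d$ bits, so $G_d$ and $G_d^{-1}$ are polynomial-time (indeed linear-time) computable. This identity itself is proved by a one-line induction on $d$, splitting on the leading bit: the first half of $G_d$ (leading bit $0$) reproduces $G_{d-1}$ with index unchanged, and the second half (leading bit $1$) reproduces $G_{d-1}^R$, which reverses the low-order index exactly as XOR-ing with the new leading $1$ does.

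Given this, computing $S_d(x)$ amounts to: compute $k = G_d^{-1}(x)$ (a pass over $d$ bits), compute $k+1$ (an $O(d)$-bit addition; if $k = 2^d - 1$ we are at the last string and $S_d$ is undefined or may be left as a boundary convention), then compute $G_d[k+1]$ (another pass). Each step is an elementary operation on $O(d)$-bit integers, so the whole computation runs in time polynomial in $d$, which is the length of the input; hence $S_d$ is polynomial-time computable. Alternatively — and this is the slicker route I would actually write down — one can give the direct "increment" rule for Gray code: $S_d(b_1\cdots b_d)$ flips $b_d$ if $b_1\oplus\cdots\oplus b_d = 0$, and otherwise flips the bit $b_j$ immediately to the left of the rightmost $1$; correctness of this rule follows from the same prefix-XOR identity. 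This makes the polynomial (linear) bound completely transparent and avoids even mentioning integer arithmetic.

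There is essentially no obstacle here: the statement is "straightforward" precisely because the recursive structure of $G_d$ is shallow and local. The only point that needs a word of care is the boundary case $k = 2^d - 1$ (the string $10^{d-1}$ in $G_d$, which has no successor); since in the intended application this string corresponds to the optimum $10^{d-1}$ of the Klee--Minty cube, I would simply note that $S_d$ need only be defined on non-terminal strings, or set $S_d(10^{d-1}) = 10^{d-1}$ by fiat. Everything else is a routine induction on $d$ together with the observation that all the operations involved touch $O(d)$ bits.
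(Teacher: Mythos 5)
Your proof is correct: the prefix--XOR closed form for $G_d^{-1}$ (and its inverse $b_i=a_{i-1}\oplus a_i$), or equivalently the direct increment rule, gives a linear-time computation of $S_d$, and your handling of the boundary string $10^{d-1}$ is the right convention for the application. The paper itself offers no proof --- it simply declares the lemma straightforward --- so there is nothing to diverge from; your argument is the standard one and fills in exactly the routine details the authors omitted.
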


Consider a vertex of the Klee-Minty cube of dimension $d$ --- equivalently, a bit string $(b_1,\ldots,b_d)\in \{0,1\}^d$.  This vertex has $d$ adjacent vertices, each obtained by flipping one of the $b_i$'s.  We call the $i$-th coordinate {\em increasing} at this vertex if the objective increases by flipping $b_i$.  The following are known important properties of the Klee-Minty cube:

\begin{lemma}
\begin{itemize}
\label{lem:klee-minty_order}
\item [(a)] The $i$-th coordinate is {increasing} if and only if $\sum_{j=1}^{i}b_j$ is even.
\item [(b)] Therefore the sequence of the vertices sorted in increasing objective is precisely $G_d$.
\end{itemize}
\end{lemma}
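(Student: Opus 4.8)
The plan is to derive both parts from a closed-form expression for the objective value at a vertex. Index the vertex by its bit string $(b_1,\dots,b_d)$ and let $v_i$ be the value of the coordinate $x_i$ there. The tight constraints force $x_d=b_d$ and, for $i<d$, $x_i=\epsilon x_{i+1}$ when $b_i=0$ while $x_i=1-\epsilon x_{i+1}$ when $b_i=1$; hence $v_i=b_i+\epsilon(-1)^{b_i}v_{i+1}$. Unrolling this recursion yields
\[
  x_1 \;=\; \sum_{i=1}^{d}\epsilon^{i-1}(-1)^{b_1+\cdots+b_{i-1}}\,b_i ,
\]
the exponent being an empty sum (hence $0$) when $i=1$, and, stripping off one level at a time, $x_1=b_1+\epsilon(-1)^{b_1}\,y_1$, where $y_1$ is the objective value of the $(d-1)$-dimensional Klee--Minty cube living on the coordinates $x_2,\dots,x_d$; note $0\le y_1\le 1$. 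Throughout, $\epsilon$ is taken small, say $\epsilon<1/2$ (which is needed in any case for the feasible region to be combinatorially a cube).

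For part (a) I would fix a vertex $(b_1,\dots,b_d)$ and an index $i$ and compute the change in $x_1$ caused by flipping $b_i$. Only two pieces of the sum move. The term $\epsilon^{i-1}(-1)^{b_1+\cdots+b_{i-1}}b_i$ changes by exactly $\epsilon^{i-1}(-1)^{b_1+\cdots+b_{i-1}}(1-2b_i)$, a quantity of absolute value $\epsilon^{i-1}$ that is positive iff $(-1)^{b_1+\cdots+b_{i-1}}(1-2b_i)=+1$, i.e.\ iff $b_1+\cdots+b_i$ is even (check the two cases $b_i=0$ and $b_i=1$ separately). Every remaining term (index $>i$) contains $b_i$ in its sign, so flipping $b_i$ negates their sum, altering $x_1$ by a further amount of absolute value at most $2\epsilon^{i}$ --- that sum equals $\pm\epsilon^i$ times an objective value of a smaller Klee--Minty cube, hence lies in $[-\epsilon^i,\epsilon^i]$. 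Since $\epsilon<1/2$ forces $2\epsilon^{i}<\epsilon^{i-1}$, the sign of the total change is that of the first piece; thus the $i$-th coordinate is increasing iff $\sum_{j=1}^{i}b_j$ is even, which is (a).

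For part (b) I would induct on $d$, the base case $d=1$ being the order $(0,1)=G_1$. For the step, use $x_1=b_1+\epsilon(-1)^{b_1}y_1$ with $y_1\in[0,1]$ and $\epsilon<1/2$: a vertex with $b_1=0$ has objective in $[0,\epsilon]$, a vertex with $b_1=1$ has objective in $[1-\epsilon,1]$, so in the increasing order every string starting with $0$ precedes every string starting with $1$. Among the former the objective is $\epsilon\,y_1$, increasing in $y_1$, so by the inductive hypothesis (applied to the sub-cube on $x_2,\dots,x_d$) these vertices appear in the order $0G_{d-1}$; among the latter the objective is $1-\epsilon\,y_1$, decreasing in $y_1$, so these appear in the reversed order $1G_{d-1}^{R}$. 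Concatenating gives precisely $G_d=(0G_{d-1},1G_{d-1}^{R})$, which is (b). (One can also recover this from (a): consecutive entries of $G_d$ differ in one bit, and from a non-maximal vertex the Gray successor flips the largest increasing coordinate, i.e.\ the one giving the smallest, $\Theta(\epsilon^{i-1})$, gain.)

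The only spots where an actual estimate enters are the inequality $2\epsilon^{i}<\epsilon^{i-1}$ in (a) and the disjointness of $[0,\epsilon]$ and $[1-\epsilon,1]$ in (b), both immediate from $\epsilon<1/2$; the rest is bookkeeping with the closed form and with the shared recursive structure of the Klee--Minty cube and the reflected Gray code, so I do not anticipate a real obstacle.
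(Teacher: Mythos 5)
Your proposal is correct, but note that the paper does not actually prove this lemma: it simply states parts (a) and (b) as ``known important properties of the Klee--Minty cube,'' so there is no in-paper argument to compare against, and your write-up supplies a genuine self-contained proof. Your route is the natural one and it checks out: the tightness pattern at a vertex gives the recursion $x_i=b_i+\epsilon(-1)^{b_i}x_{i+1}$, hence the closed form $x_1=\sum_{i=1}^{d}\epsilon^{i-1}(-1)^{b_1+\cdots+b_{i-1}}b_i$; for (a) the flipped term contributes $\pm\epsilon^{i-1}$ with sign governed by the parity of $b_1+\cdots+b_i$ (in the original vertex, in both cases $b_i=0,1$), while the tail is $\pm 2\epsilon^{i}$ times a sub-cube objective in $[0,1]$, so $\epsilon<1/2$ makes the leading term decide the sign; for (b) the induction on $d$ exactly mirrors the reflected-Gray-code recursion $G_d=(0G_{d-1},1G_{d-1}^{R})$, using that the $b_1=0$ block has objectives $\epsilon y_1\in[0,\epsilon]$ (increasing in $y_1$) and the $b_1=1$ block has $1-\epsilon y_1\in[1-\epsilon,1]$ (decreasing in $y_1$). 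Two minor points worth making explicit: the hypothesis $\epsilon<1/2$ should be stated as part of the construction (the paper only says the cube is a distorted hypercube, which indeed forces it), and the phrase ``sorted in increasing objective is precisely $G_d$'' tacitly requires all $2^d$ objective values to be distinct --- your interval argument (disjointness of $[0,\epsilon]$ and $[1-\epsilon,1]$ plus induction), or the strict dominant-term bound in (a), yields this, and it is worth one sentence since the paper's non-degeneracy assumption is being instantiated here. Your parenthetical alternative derivation of (b) from (a) is also sound once phrased as: each Gray-code step flips a coordinate whose prefix sum is even, hence is objective-increasing by (a), so $G_d$ enumerates all vertices in strictly increasing objective and must be the sorted order.
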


We next describe the starting PSPACE-complete problem.  Recall that a {\em Boolean circuit} is a directed acyclic graph whose nodes
are called {\em gates}, and are of several types:  {\em Input}, with in degree zero, {\em not}, with in degree one, and {\em or} and {\em and} with
in-degree two.  The sinks of this directed acyclic graph are called {\em outputs}.   Suppose that we are given a Boolean circuit $C$ with $n$
inputs and $n$ outputs.  If the inputs are set to a particular string $x\in\{0,1\}^n$, $C$ will ``compute'' an output
 string $C(x)\in\{0,1\}^n$.  Suppose that this particular circuit $C$ has the property that the Hamming ($L_1$, that is) distance
  between $x$ and $C(x)$ is always one (that is, $C$ just computes the index of the input to flip).  The {\em path} of $C$ is the sequence $(x_i, i = 0,\ldots)$,
 where $x_0=0^n$ and $x_{i+1}=C(x_i)$.  The $C$-{\sc path} problem is the following.

$C$-{\sc path}:  Given $C$ and $x_C \in \{0,1\}^n$, is $x_C$ on the path of $C$?  It is obviously in PSPACE (one need only try the first $2^n$ bit strings in the path of $C$, reusing space; if $x_C$ is not reached by that time, we are in a loop and $x_C$ will never be reached).

\begin{lemma}
There is a family of circuits $C$ of size polynomial in the number of inputs and of polynomial complexity such that $C$-{\sc path} is PSPACE-complete.
\end{lemma}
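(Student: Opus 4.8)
\emph{Proof plan.}
The plan is to reduce from the acceptance problem of a fixed deterministic Turing machine $M$ that runs in polynomial space $s=s(\ell)$ on inputs of length $\ell$ --- a PSPACE-complete problem. We may assume that $M$ always halts (append a step counter if necessary) and that it has a unique accepting configuration $c_{\mathrm{acc}}$ and a unique rejecting configuration $c_{\mathrm{rej}}$ (clear the tape before halting). Encode configurations of $M$ on an input $w$ as strings of length $m=O(s)=\mathrm{poly}(\ell)$, and build in the usual way a polynomial-size circuit $g\colon\{0,1\}^m\to\{0,1\}^m$ computing the next-configuration map, with $c_{\mathrm{acc}}$ and $c_{\mathrm{rej}}$ as its only fixed points. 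Writing $c_0=c_0(w)$ for the initial configuration, $M$ accepts $w$ if and only if the orbit $c_0,g(c_0),g^2(c_0),\dots$ reaches $c_{\mathrm{acc}}$ (it must reach $c_{\mathrm{acc}}$ or $c_{\mathrm{rej}}$, since $M$ halts). As noted above, $C$-{\sc path} is in PSPACE, so it remains to turn this orbit-reachability question into a $C$-{\sc path} instance.

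The circuit $C$ will act on $n=\mathrm{poly}(\ell)$ bits, organized into two $m$-bit registers $R_0,R_1$ (one holds the ``current'' configuration, the other is scratch), a \emph{thermometer counter} holding a value $i\in\{0,\dots,m\}$ as $1^i0^{m-i}$, and an $O(1)$-bit mode register; the thermometer counter is the key device, since it can be incremented or decremented by flipping a single bit, unlike a binary counter. The dynamics of $C$ implements the following loop, every step flipping exactly one bit. From a \emph{checkpoint} state (mode \emph{compute}, counter $0$, a select bit naming the active register $R_s$) we sweep the counter from $0$ up to $m$; at counter value $i-1$ we want $(R_{1-s})_i$ to equal $g(R_s)_i$, so if it already does we advance the counter by one flip, and otherwise we flip that one data bit (advancing the counter on the following step). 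The correct move is always a function of the entire state, so $C$ is well defined. When the counter reaches $m$ we flip the select bit (so that $R_{1-s}=g(R_s)$ becomes active) and sweep the counter back down to $0$, reaching the next checkpoint; the few mode changes are realized by short hard-wired sequences of single-bit flips on the mode register, with all intermediate micro-states kept disjoint from every checkpoint state. A preliminary \emph{startup} phase, entered from $0^n$, writes $c_0$ into $R_0$ one bit at a time in the same style. Finally, at each checkpoint $C$ checks whether the active register equals $c_{\mathrm{acc}}$ or $c_{\mathrm{rej}}$; if it equals $c_{\mathrm{acc}}$, $C$ cleans the state up (bit by bit) to a fixed string $x_{\mathrm{acc}}$ and thereafter runs the $2$-cycle $x\leftrightarrow x\oplus e_1$ (a fixed point is forbidden, as the Hamming distance must be exactly one); if it equals $c_{\mathrm{rej}}$, it does the same toward a fixed string $x_{\mathrm{rej}}$, using mode codes disjoint from those of the accept cleanup. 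On any $n$-bit string outside this intended format, $C$ just flips a default coordinate, so the ``distance exactly one'' property holds for all of $\{0,1\}^n$, and $C$ has size $\mathrm{poly}(\ell)$ and is computable from $w$ in polynomial time.

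For correctness, the path of $C$ from $0^n$ runs through \emph{startup}, then through the checkpoints encoding $c_0,g(c_0),g^2(c_0),\dots$ with no repetitions before a halting configuration (within a block the counter moves monotonically and each data bit is written at most once, and between blocks the configurations are distinct because $M$ is deterministic and halting), until it reaches the checkpoint for $c_{\mathrm{acc}}$ or for $c_{\mathrm{rej}}$ and then settles into the corresponding $2$-cycle. Taking the target to be $x_C:=x_{\mathrm{acc}}$, and noting that the reject branch never visits $x_{\mathrm{acc}}$, we conclude that $x_C$ lies on the path of $C$ iff the orbit of $g$ from $c_0$ reaches $c_{\mathrm{acc}}$ iff $M$ accepts $w$. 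Hence $w\mapsto (C,x_C)$ is a polynomial-time reduction, the circuits produced form a polynomial-size, polynomial-time-constructible family, and $C$-{\sc path} for this family is PSPACE-hard, hence (with the upper bound above) PSPACE-complete.

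The only real difficulty is the bookkeeping in the construction above: one must lay out the $n$-bit state so that the phases (\emph{startup}, the upward counter sweep, the select flip, the downward counter sweep, the accept and reject cleanups, the terminal $2$-cycles, and the ``garbage'' case) occupy disjoint, readily recognizable regions of $\{0,1\}^n$, so that at every state --- and in particular across every phase boundary --- one can flip exactly one bit with the choice of bit a function of the whole state. This is the standard ``simulate a space-bounded machine by a one-bit-at-a-time process'' argument; it is tedious but routine, and it never threatens the polynomial size bound.
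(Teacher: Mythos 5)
Your one-bit-at-a-time compilation (thermometer counter, two registers, $O(1)$ mode bits, terminal $2$-cycle to avoid fixed points, default flip on ill-formatted strings) is a sound and in fact more detailed realization of what the paper waves off as a ``straightforward coding trick,'' and the PSPACE upper bound is handled correctly. The gap is in where the instance information lives. You hard-wire the input $w$ into the circuit --- your startup phase writes $c_0(w)$ --- and take the target $x_C$ to be a \emph{fixed} string $x_{\mathrm{acc}}$. Thus your circuits form a family indexed by the instance $w$, not by the number of inputs: two inputs of the same length produce different circuits with the same target. The lemma, in the form the paper needs it (``a family of circuits of size polynomial in the number of inputs and of polynomial complexity''), requires the opposite distribution of information: a single circuit $C_n$ per input length, constructible from $n$ alone, with the entire PSPACE-hard instance carried by the string $x_C$; this is exactly what the paper's sketch means by ``the starting and ending strings can be made to be $0^n$ and $x_C$ respectively.'' The distinction is not pedantic: in the proof of Theorem \ref{thm:Simple_is_PSPACE_hard} the linear program is just the $2n$-dimensional Klee--Minty cube (of logarithmic description complexity), the pivoting rule $R$ reconstructs ``the circuit with $n$ inputs in the family $C$'' from $n$ alone in Step 4, and the only instance-dependent object is the basis $\hat B = x_G x_C$. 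With your family, $R$ has no way to know which circuit $C_w$ to run, and with your fixed target the basis $\hat B$ would not even depend on $w$; so the statement you prove is too weak to support the theorem.

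The repair is standard and compatible with everything else you did. Do not simulate $M$ on $w$; instead let a fixed machine $M_0$, started on the empty tape of length $m$ (which your compilation encodes as $0^n$), enumerate all candidate inputs $w$ that fit in the available space, simulate a fixed polynomial-space decider for a PSPACE-complete language on each candidate in turn, and, whenever a candidate $w$ is accepted, pass through a canonical ``announce'' configuration $a(w)$ that retains $w$ on a dedicated track (everything else in a canonical state) before cleaning up and moving to the next candidate. The step circuit then depends only on $n$, and the reduction maps $w$ to the target $x_C :=$ (the checkpoint encoding of) $a(w)$: the path from $0^n$ visits $x_C$ if and only if $w$ is accepted. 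Your thermometer-counter compilation applies verbatim to $M_0$, and with this change your argument proves the lemma in the form the paper actually uses.
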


\paragraph{Sketch of proof:} We start from the halting problem for a linear bounded one-tape Turing machine with empty starting tape.
Without the single-bit restriction the circuit $C$ could just find the first transition that applies and implement
 it (the starting and ending strings can be made to be $0^n$ and $x_C$ respectively).  The single-bit restriction can be realized through a straightforward coding trick.

\medskip

The reduction proceeds as follows:  Given an input $x_C \in \{0,1\}^n$, we shall construct a linear program and a
basis $\hat B$ such that $\hat B$ lies on the path of rule $R$ (yet to be described) if and only if $x_C$ lies on the path of $C$.
   The linear program is the Klee-Minty cube of dimension $2n$.  The last (least significant) $n$ coordinates of the cube will serve
   to encode the current bit string on the path of $C$, while the first $n$ coordinates will maintain a counter in Gray code.
   We denote the last string of the Gray code, $10^{n-1}$, by $x_G$.
The sought basis $\hat B$ is taken to be $\hat B =x_Gx_C$.

Next we describe the pivoting rule $R$.   In fact, it suffices to define $R$ only on Klee-Minty cubes of even dimension --- on any other linear program, $R$ can be any pivoting rule, say steepest descent.  First, the initial basis of $R$ is $B_0=0^{2n}$.  Second, here is the description of how  $R$ modifies the current basis $B$ (which, since the linear program is the Klee-Minty cube of dimension $2n$ is represented by a bit string of length $2n$):

\medskip
{\bf Pivoting Rule $R$ on basis $B$}:
\begin{enumerate}
\item If $B=10^{2n-1}$, this is a terminal basis and we are done.  Otherwise, let $B = (B_1,B_2),$ each a string of length $n$.
\item If $B_2 = x_C$ then $R(B)=(S_n(B_1),B_2)$.
\item Otherwise, if $B_1 = x_G$ then $R(B)=(B_1,S_n(B_2))$.
\item Otherwise, construct the circuit with $n$ inputs in the family $C$.
\item Compute $C(B_2)$; suppose that $B_2$ and $C(B_2)$ differ in the $i$-th place (by assumption, they only differ in one).
\item If the $n+i$-th coordinate of B is increasing, then $R(B) = (B_1,C(B_2))$.
\item Otherwise, $R(B)=(S_n(B_1),B_2)$.
\end{enumerate}

To explain the workings of $R$, the first $n$ bits are a counter, and the last $n$ bits encode the current bit string on the path of $C$ from $0^n$.
If either the first $n$ bits are $x_G$ or the last $n$ bits are $x_C$, then $R$ just counts up in the other counter (Steps 2 and 3).  Otherwise, (Steps 4 and 5), $C(B_2)$ is computed.
The intention now is to update the last $n$ bits to be $C(B_2)$.  If the flipped coordinate happens to be increasing in $B$, then this is done
immediately (Step 6).  But if it is not, then we do the following maneuver:  We increment the $B_1$ counter by flipping the bit in $B_1$
that leads to the next string in the Gray code (Step 7).  This way,
in the
next invocation of the pivot rule the flipped bit {\em will} be increasing (by Lemma \ref{lem:klee-minty_order}(a)).

To show that $R$ is a pivoting rule, it remains to show that it is strongly polynomial, and that there is a potential function $\phi_R$ such that the pivot step of $R$ is always monotonically increasing.  The former is immediate.  For the latter, $\phi_R(B)$ is the value of the objective $x_1$ in the basic feasible solution represented by $B$.  It is easy to see by inspection of Steps 2, 3, 6, and 7 that in each of these four cases $\phi_R(R(B))>\phi_R(B)$.

Finally, we must show that $\hat B$ is on the path of $R$ if and only if $x_C$ is on the path of $C$.  If $x_C$ is on the
path of $C$ then eventually $B_2$ will be $x_C$, after at most $2^n-1$ steps,  and from then on Step 2 will be executed to
 increment the counter $B_1$.  This counter must go through $\hat B$ just before arriving at the terminal basis.
 % $10^{2n-1}$.
If $x_C$ is not on the path of $C$ then the path of $C$ will cycle until eventually Step 7 will be executed for a $2^n$-th time
(it can be easily checked that the cycling of the path of $C$ does not avoid Step 7), at which point $B_1=x_G$.  From then on $\hat B$ cannot be reached.
This completes the proof of Theorem \ref{thm:Simple_is_PSPACE_hard}.   \qed

\section{A Tractable Pivoting Rule}
The pivoting rule we proved intractable is not a natural one.  We conjecture that essentially all the pivoting rules described in the last section are intractable (even though proving such a result seems to us challenging).  However, here we point out that there {\em is} a natural, classical pivoting rule that is tractable:

\begin{theorem}
\label{thm:shadow_is_tractable}
The shadow vertex pivoting rule is tractable.
\end{theorem}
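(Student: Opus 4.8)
The plan is to use the fact that the shadow vertex rule is a one-parameter (parametric, or self-dual) method, so that its path admits a purely static description: a basis $B$ lies on the path precisely when $B$ is an optimal basis of the interpolated linear program for \emph{some} value of the homotopy parameter $\lambda\in[0,1]$. Since, for a fixed basis $B$, both ``$B$ is primal feasible'' and ``$B$ is dual feasible'' for the interpolated data are systems of affine inequalities in the single scalar $\lambda$, deciding the path problem will reduce to testing feasibility of a one-variable linear program --- which is trivially strongly polynomial.

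First I would make the interpolation explicit. On input $(A,b,c)$ the rule computes, in strongly polynomial time, its initial basis $B_0$ and then vectors $b_0,c_0$ for which $B_0$ is simultaneously primal and dual feasible for $(A,b_0,c_0)$ --- for instance $b_0=B_0u$ with $u>0$ and $c_0$ with reduced costs $0$ on $B_0$ and strictly negative off it --- chosen generically enough (possible deterministically, since these constraints leave an open set of admissible choices) that $(A,b_0,c_0)$ is non-degenerate and every degenerate parameter value of the interpolated family $b_\lambda:=\lambda b+(1-\lambda)b_0$, $c_\lambda:=\lambda c+(1-\lambda)c_0$ inside $[0,1]$ is simple (caused by a single coordinate or a single reduced cost vanishing). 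As $\lambda$ increases from $0$, the interpolated LP has an optimal solution exactly as long as it is both primal and dual feasible, i.e. for $\lambda$ in a closed interval $[0,\lambda^\ast]$ with $\lambda^\ast\le 1$ (and $\lambda^\ast=1$ unless feasibility or boundedness is lost first); the shadow vertex path is exactly the list of optimal bases met as $\lambda$ sweeps $[0,\lambda^\ast]$, ending at the optimum of $(A,b,c)$ when $\lambda^\ast=1$ and otherwise at an infeasibility or an unboundedness certificate detected at $\lambda=\lambda^\ast$.

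Next I would establish the characterization: a basis $B$ lies on the path for $(A,b,c)$ if and only if there is a $\lambda\in[0,1]$ with $B^{-1}b_\lambda\ge 0$ and $c_\lambda^{T}-(c_\lambda)_B^{T}B^{-1}A\le 0$ (that is, $B$ is an optimal basis of the interpolated LP at $\lambda$). The ``only if'' direction is immediate, since each path basis --- including, when $\lambda^\ast<1$, the final one, which is optimal on an interval just below $\lambda^\ast$ --- is optimal at some parameter value $\le\lambda^\ast\le 1$. For ``if'': if $B$ is primal and dual feasible at some $\hat\lambda\in[0,1]$, then the interpolated LP is feasible and bounded there, so $\hat\lambda\le\lambda^\ast$, and the solution set $\Lambda_B$ of the two inequalities is a sub-interval of $[0,\lambda^\ast]$ through $\hat\lambda$ which, by the genericity of $b_0,c_0$, has positive length; since by non-degeneracy of the interpolated family the optimal basis at a parameter value is unique off the finitely many breakpoints, the interior of $\Lambda_B$ is owned by $B$ along the path (and the breakpoints bounding $\Lambda_B$, if any, are pivot points incident to $B$), so $B$ is on the path. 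Granting this, the algorithm is immediate: compute $B^{-1}$ by strongly polynomial Gaussian elimination, form $B^{-1}b$, $B^{-1}b_0$ and the reduced-cost vectors of $c$ and $c_0$ with respect to $B$, and write the two vector inequalities above together with $0\le\lambda\le 1$ as $O(m+n)$ affine inequalities in the single unknown $\lambda$; deciding whether this one-dimensional system has a solution (by intersecting the resulting intervals) is strongly polynomial and answers the path problem, proving the theorem.

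The hard part will be the ``if'' half of the characterization together with the degeneracy bookkeeping it requires: one must give a deterministic construction of $b_0,c_0$ that rules out coincident degeneracies, verify that the shadow vertex path genuinely visits \emph{every} optimal basis of the interpolated family over $[0,\lambda^\ast]$ (so that at each breakpoint exactly two bases are optimal and they are consecutive on the path), and match the rule's early-termination behaviour --- its detection of infeasibility and of unboundedness --- against the endpoint $\lambda^\ast$ of the feasibility/boundedness interval. Once that is in place, the reduction to one-variable linear feasibility is routine.
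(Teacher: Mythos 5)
Your proposal is correct and takes essentially the same route as the paper: the identical characterization that $B$ lies on the shadow vertex path if and only if it is primal and dual feasible for the interpolated data at some $\lambda\in[0,1]$, reduced to a one-variable system of linear inequalities solvable in strongly polynomial time. Your justification of the ``if'' direction via uniqueness of the optimal basis at generic $\lambda$ is the same non-degeneracy argument the paper makes (there phrased as running the rule forward and backward from $B$ and deriving a contradiction), so the extra bookkeeping you flag as the hard part is exactly what the paper dispatches with that one observation.
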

\begin{proof}
Given a linear program $(A,b,c)$, let $B_0$ be the initial basis, and let $b_0$ and $c_0$ be the corresponding initial values of the primal and dual
 right-hand-side vectors.  Given a basis $B$, we claim that the following is a necessary and sufficient condition that $B$ lies on the path of shadow vertex:

\begin{quote}
There is a real number $\lambda\in [0,1]$ such that $(1-\lambda)B^{-1}b_0 +\lambda B^{-1}b\geq 0$ and
 $(1-\lambda)(c_0^T-(c_0)_B^TB^{-1}A) + \lambda(c^T-c_B^TB^{-1}A) \leq 0$.
\end{quote}

In proof, any basis on the path has a non-empty interval of $\lambda$'s for which these inequalities hold.  And if for a given basis $B$ this condition is satisfied, then the inequalities are satisfied for a subinterval of $[0,1]$.  If we assume, for contradiction, that $B$ is not on the path of shadow vertex, then we can run the shadow vertex pivoting rule forward and backward  from $B$, and eventually arrive from a different path to the beginning and end, contradicting non-degeneracy.
As the condition is a system of $2n$ linear inequalities with one unknown, this completes the proof. \qed
\end{proof}

There is an interesting story here, connecting tractability of pivoting rules and the saga of the average-case analysis of the simplex method.  During the early 1980s, and in the wake of the ellipsoid algorithm, average analysis of the simplex method (under some reasonable distribution of linear programs) was an important and timely open question, and indeed there was a flurry of work on that problem \cite{HB,SS,AM,AKS,MT}.  It was noticed early by researchers working on this problem that one obstacle in analyzing the average complexity of various versions of the simplex method was a complete inability to predict the path of pivoting rules --- that is, the apparent intractability of the path problem we are studying here.  And this makes sense:  If one cannot characterize well the circumstances under which a vertex will appear on the path, it is difficult to deduce the average performance of the algorithm by adding expectations over all vertices.  Once Borgwardt  \cite{HB} and Smale \cite{SS} had the idea of using the shadow vertex pivoting rule in this context, further progress ensued \cite{AM,AKS,MT}.

\section{Randomized Pivoting Rules}
Many pivoting rules are explicitly randomized, aiming at good expected performance.  Our definition can easily be extended to include randomization:  In the definition of a pivoting rule, $R(B)$ is not a single adjacent basis, but a {\em distribution} on the set of adjacent bases (naturally, this set is polynomially small).  Any basis $B'$ in the support of $R(B)$ must satisfy $\phi_R(B')>\phi_R(B)$.  Obviously, deterministic pivoting rules are a special case, and therefore Theorems \ref{thm:Simple_is_PSPACE_hard} and \ref{thm:shadow_is_tractable} trivially apply here too.

What is slightly nontrivial is to define what ``intractable'' means in this case.  That is, what is the ``path problem'' for a randomized pivoting rule $R$?  We believe that the right answer is the following ``promise'' problem:

\begin{definition}
Fix a polynomial $p$ and a function $f: Z^2 \mapsto [0,1-{1\over p(m,n)}]$.  The {\em $(f,p)$-path problem} associated with a randomized pivoting rule $R$ is the following:  Given an $m\times n$  linear program and a feasible basis $B$, distinguish between these two cases:  $B$ appears in the execution of $R$ with probability (a) at most $f(m,n)$; and (b)  at least $f(m,n)+{1\over p(m,n)}$.
\end{definition}

The analog of Proposition \ref{prop:strongly_poly_intractable_PSPACE=P} is now:

\begin{proposition}
If a randomized pivoting rule $R$ is strongly polynomial in expectation, then the $(f,p)$-path problem of rule $R$ is in BPP, for all $f$ and $p$.
\end{proposition}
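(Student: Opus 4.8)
The plan is to estimate, by repeated Monte Carlo simulation of $R$, the probability that the given basis $B$ appears on the path, and to compare that estimate against the threshold $f(m,n)+\frac{1}{2p(m,n)}$. First I would unpack the hypothesis: ``strongly polynomial in expectation'' means there is a polynomial $q$ such that for every $m\times n$ linear program the expected number of pivot steps on the path of $R$ is at most $q(m,n)$; write $\ell$ for this random number of steps. Since each step strictly increases the potential $\phi_R$ and there are only finitely many bases, $\ell<\infty$ with probability $1$, so the event ``$B$ appears on the path of $R$'' is well defined; call its probability $\pi$. In case (a) $\pi\le f(m,n)$, in case (b) $\pi\ge f(m,n)+\frac{1}{p(m,n)}$, and the algorithm must tell the two apart.

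Next, \emph{truncation}. Put $T:=6\,p(m,n)\,q(m,n)$ and let $\tilde\pi$ be the probability that $B$ occurs among the bases visited during the first $T$ pivot steps of $R$ (or during the whole path, if it is shorter). Clearly $\tilde\pi\le\pi$, and $\pi-\tilde\pi\le\Pr[\ell>T]\le \mathbf{E}[\ell]/T\le\frac{1}{6p(m,n)}$ by Markov's inequality. Hence in case (a) we still have $\tilde\pi\le f(m,n)$, while in case (b) we have $\tilde\pi\ge f(m,n)+\frac{1}{p(m,n)}-\frac{1}{6p(m,n)}=f(m,n)+\frac{5}{6p(m,n)}$; so $\tilde\pi$ already separates the two cases, with a gap of $\frac{2}{3p(m,n)}$ straddling the threshold $f(m,n)+\frac{1}{2p(m,n)}$.

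Then, \emph{sampling}. Run $N:=\lceil 18\,p(m,n)^2\ln 6\rceil$ independent simulations of $R$ on $(A,b,c)$ from $B_0$, each for at most $T$ pivot steps, recording in each whether $B$ was visited, and let $\hat\pi$ be the fraction of runs for which it was. One simulation performs at most $T$ iterations, and each iteration is strongly polynomial — performing one step of $R$, including drawing from the distribution $R(B')$ over the polynomially small set of neighbours, is exactly one invocation of the rule, on numbers of polynomially bounded bit-length — so the whole experiment runs in time polynomial in the input size. By the Hoeffding bound $\Pr\!\big[\,|\hat\pi-\tilde\pi|>\frac{1}{6p(m,n)}\,\big]\le 2\exp\!\big(-N/(18\,p(m,n)^2)\big)\le\frac13$. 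The algorithm accepts (declares case (b)) iff $\hat\pi>f(m,n)+\frac{1}{2p(m,n)}$: with probability at least $\frac23$, in case (a) $\hat\pi\le f(m,n)+\frac{1}{6p(m,n)}<f(m,n)+\frac{1}{2p(m,n)}$, and in case (b) $\hat\pi\ge f(m,n)+\frac{5}{6p(m,n)}-\frac{1}{6p(m,n)}=f(m,n)+\frac{2}{3p(m,n)}>f(m,n)+\frac{1}{2p(m,n)}$, so the answer is correct. (We assume $f$ is efficiently computable, so the threshold can be evaluated; this is implicit in ``fix $\dots$ a function $f$'', and otherwise $f(m,n)$ can be taken as part of the input.) This is a $\mathrm{BPP}$ algorithm for the $(f,p)$-path problem of $R$.

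The only step that needs genuine care is the truncation: one must observe that cutting the simulation off after $T$ steps can only \emph{decrease} the probability that $B$ is seen, and by no more than $\Pr[\ell>T]$, which Markov's inequality forces below $\frac{1}{6p(m,n)}$ using the polynomial bound on $\mathbf{E}[\ell]$ — this is exactly where the ``strongly polynomial in expectation'' hypothesis is used. Everything else is the standard routine of estimating an inverse-polynomial bias by averaging polynomially many independent trials, so I would expect the write-up to be short.
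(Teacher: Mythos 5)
Your argument is correct, and it is essentially the intended one: the paper states this proposition without proof (as the analog of its "straightforward" Proposition 1), and the standard justification is exactly your three steps — truncate the simulation at a polynomial number of pivots using Markov's inequality on the expected path length (the only place the hypothesis is used), estimate the truncated hitting probability by polynomially many independent runs, and separate the two promise cases via a Chernoff--Hoeffding bound, with the minor caveat you already flag that $f(m,n)$ must be available to the algorithm. Nothing is missing; your write-up supplies the details the paper omits.
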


Recall that BPP is the class of all problems that can be solved by randomized algorithms, possibly with a small probability or error, see Chapter 10 in \cite{Papa}.

\section{Discussion}
Pivoting rules constitute a rich and interesting class of algorithmic objects, and here we focused on one important attribute: whether or not the path problem of a pivoting rule is tractable.  We have pointed out that there is an intractable pivoting rule, whereas a well-known classical pivoting rule  is tractable.  The most important problem we are leaving open is to exhibit a natural intractable pivoting rule.  For example, establishing the following would be an important advance:

\begin{conjecture}
Steepest descent is intractable.
\end{conjecture}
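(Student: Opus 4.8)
\medskip
\noindent\textbf{A plan of attack.}
The plan is to replay the proof of Theorem~\ref{thm:Simple_is_PSPACE_hard} with Dantzig's rule in place of the artificial rule~$R$: reduce from $C$-{\sc path}, take the underlying linear program to be a Klee--Minty cube (of dimension $2n$, or somewhat more to make room for an evaluation of~$C$), let the low-order block of the basis encode the current bit string on the path of~$C$ and the high-order block a Gray-code counter, and place the target basis $\hat B$ just before the terminal vertex so that it is reached iff $x_C$ is on the path of~$C$. The one thing we must now \emph{earn} rather than \emph{decree} is the behaviour of Steps~4--7 of~$R$: Dantzig's rule flips the non-basic index of largest reduced cost $c_j^B=c_j-c_B^{T}B^{-1}A_j$, so the whole combinatorial itinerary has to be encoded in the geometry and numerical data of the program rather than written into the rule.

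Two routes seem worth pursuing, and they face the same obstacle. The first is to deform the $\epsilon$'s of the Klee--Minty cube coordinate by coordinate --- the device by which one already forces Dantzig's rule (and the steepest-edge rule) down an exponentially long path --- and then to have the forced path branch, at each ``carry'', according to one gate of~$C$, using a handful of extra coordinates to carry out the circuit evaluation as a sub-sequence of forced pivots. The second, likely smoother, route goes through the known correspondence between linear programs of Markov decision processes and the simplex method: bases become policies, a Dantzig pivot becomes a single-switch policy-improvement step that switches the action of largest appeal, and one can then wire ``bit gadgets'' into a binary counter exactly as in the existing exponential lower bounds for policy iteration, replacing the counter's carry logic by a sub-gadget that evaluates~$C$. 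In either route, when $C$ would start to cycle the high-order block freezes at $x_G$ and $\hat B$ becomes unreachable, exactly as in the present proof.

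The crux --- and the step I expect to be the real difficulty --- is the numerical bookkeeping that makes either construction actually behave as advertised. One must choose the rewards and transition probabilities (or, in the direct route, the $\epsilon$'s) so that at every basis on the intended path the ``right'' non-basic column has \emph{strictly} the largest reduced cost among \emph{all} non-basic columns of the entire, exponentially large program, so that Dantzig's greedy, tie-free choice coincides with the prescribed move; the monotone potential (the objective $x_1$) then comes for free, non-degeneracy is handled by the symbolic perturbation of Section~2, and one must also keep the whole program of polynomial bit-size so that the reduction is strongly polynomial. The obstacle is that the reduced-cost landscape of any one gadget is perturbed, through $B^{-1}$, by every other gadget at once, so the magnitudes have to be forced into a strict geometric hierarchy that survives both this global coupling \emph{and} the extra structure injected by the circuit-evaluation sub-gadget, all while staying polynomially bounded. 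Once such a program is built, intractability follows as before: membership of the path problem in PSPACE is the observation preceding Proposition~\ref{prop:strongly_poly_intractable_PSPACE=P}, hardness is the reduction just sketched, and Proposition~\ref{prop:strongly_poly_intractable_PSPACE=P} then yields that steepest descent is not strongly polynomial unless $\mathrm P=\mathrm{PSPACE}$.
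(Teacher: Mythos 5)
There is a genuine gap, and you have in fact named it yourself: the ``crux'' you defer --- choosing the data of the program so that at every basis on the intended path the prescribed entering column has \emph{strictly} the largest reduced cost among all non-basic columns, under the global coupling through $B^{-1}$, while keeping the instance of polynomial size --- is not a technical bookkeeping step to be filled in later; it \emph{is} the entire content of the statement. Note that the statement you are addressing is left as a conjecture in the paper precisely because of this: the proof of Theorem~\ref{thm:Simple_is_PSPACE_hard} succeeds only because the rule $R$ is allowed to compute the circuit $C$ itself, so the linear program can remain an essentially information-free Klee--Minty cube (the paper stresses it has only logarithmic complexity). For Dantzig's rule the rule is fixed once and for all, so all of the computation of $C$ must be simulated by the numbers of the linear program. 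On a Klee--Minty cube of dimension $2n$ (even with coordinate-wise choices of $\epsilon$), Dantzig's rule follows a single path determined by the instance alone, and your Steps~4--7 have no analogue: there is no mechanism by which the pivot choice at a vertex can ``branch according to one gate of $C$'' unless you exhibit concrete gadgets whose reduced-cost hierarchy implements gate evaluation, and no such gadgets are constructed or even specified in the proposal.

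The MDP/policy-iteration route is a reasonable research direction (it is how exponential lower bounds for single-switch policy iteration are obtained), but an exponential lower bound is a much weaker statement than what is needed here: PSPACE-hardness of the path problem requires embedding arbitrary polynomial-space computation into the pivot sequence, i.e.\ making Dantzig's greedy choice simulate the transition function of a space-bounded machine, and asserting that the ``carry logic can be replaced by a sub-gadget that evaluates $C$'' is again the unproved heart of the matter. As it stands, the proposal is a plan of attack that correctly identifies the obstacle but does not overcome it, so the conjecture remains open; the concluding appeal to Proposition~\ref{prop:strongly_poly_intractable_PSPACE=P} is fine but only kicks in once the hardness reduction actually exists.
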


This looks quite challenging.  Obviously, in such a proof much more will need to be encoded in the linear program (which, in the present proof, was of logarithmic complexity).  The ultimate goal is a generic intractability proof that works for a large class of pivoting rules, thus delimiting the possibilities for a strongly polynomial algorithm.  For example: Are all primal pivoting rules (the ones using only feasible bases) intractable?

There are pivoting rules beyond linear programming, usually associated with the linear complementarity problem (LCP, see \cite{Cot}).  They generally do not have a potential function, and termination is proved (when it is proved) by combinatorial arguments.  Lemke's algorithm is a well-known general pivoting rule, known to terminate with a solution (or with a certification that no solution exists) in several special cases.  It is known to be intractable in general \cite{GPS}, but it can be shown to be tractable when the matrix is positive definite.  We conjecture that it is intractable when the matrix is positive semidefinite.

\bibliographystyle{\alpha}

\begin{thebibliography}{1}


\bibitem{AKS}
I.~Adler, R.M.~Karp and R.~Shamir.
``A Family of Simplex Variants Solving an m x d Linear Program in Expected Number of Pivot Steps Depending on d Only''.
  Mathematics of Operations Research, Vol. 11,  No~4, pp. 570-590. (1986)

\bibitem{AM}
I.~Adler,  and N.~Megiddo.
``A Simplex Algorithm whose Average Number of Steps is Bounded between Two Quadratic Functions of the Smaller Dimension''.
 Journal of the ACM,  Vol.~32,  No~4, pp. 471-495. (1985)

\bibitem{AZ} N.~Amenta and G.~Ziegler ``Deformed products and
maximal shadows of polytopes''. in
 Advances in
Discrete and Computational Geometry, pp. 57-90. (1996)


\bibitem{AC}
D.~Avis and C.~Chvatal.
``Notes on Bland's Pivoting Rule''.
Math. Programming Study, Vol.~8, pp. 24-34. (1978)

 \bibitem{HB}
 K.H~Borgwardt.
``The Average Number of Steps Required by the Simplex Method is Polynomial''.
 Zeitschrift fur Operations Research, Vol.~26, No~1, pp. 157-177. (1982)

\bibitem{Ch}
S. ~Chubanov.
``A strongly polynomial algorithm for linear systems having a binary solution''.
Math. Programming, Vol.~134, No. 2, pp. 533-570. (2012)

 \bibitem{Cot}  R. ~Cottle, J.-S. Pang, R.~E.~Stone
The linear complementarity problem.  Academic Press, (1992)


 \bibitem{GBD_LP}
G.B.~Dantzig.
``Maximization of a Linear Function of Variables subject to Linear Inequalities'' (1947).
Published in T.C. Koopmans (ed.): Activity Analysis of Production and Allocation, Wiley \& Chapman-Hall, pp. 339-347 (1951)

 \bibitem{GBD}
G.B.~Dantzig.
``Linear Programming and Extensions''.
Princeton University Press and the RAND Corporation. (1963).

\bibitem{DFF}  G.~B.~Danzig,  L.~R.~Ford, D.~R.~Fulkerson,
``A Primal-Dual Algorithm for
Linear Programming''. in H.~W.~Kuhn and A.~W.~Tucker (ed.),
 Linear Inequalities
and Related Systems, Princeton University Press, pp. 171-181. (1954)

\bibitem{EK}  J.~Edmonds, R.~M.~Karp  ``Theoretical improvements in algorithmic efficiency for network flow problems''.
Journal of the ACM  19 (2), pp.~248-264, (1972)

\bibitem{FM}
K.~Fukuda and T.~Matsui.
``On the Finiteness of the Criss-cross Method''.
 European Journal of Operations Research, Vol.~52, pp. 119-124. (1991)

\bibitem{GPS}
P.W.~Goldberg, C.H.~Papadimitriou and R.~Savani.
``The Complexity of the Homotopy Method, Equilibrium Selection,
and Lemke-Howson Solutions''.
ACM Transactions on Economics and Computation, Vol.~1, No~2, Article 9. (2013)

 \bibitem{GR}
 D.~Goldfarb and J.K.~Reid.
``A Practical Steepest-Edge Simplex Algorithm''.
 Mathematical Programming, Vol.~12,  pp. 361-371. (1977)

 
 \bibitem{RGJ}
 R.G.~Jeroslow.
``The Simplex Algorithm with the Pivot Rule of Maximizing Criterion Improvement''.
 Discrete Mathematics, Vol.~4,  pp. 367-377. (1973)

\bibitem{GK}
 G.~Kalai.
``A Subexponential Randomized Simplex Algorithm''.
ACM 24th Symposium on Theory of Computing,
  pp. 475-482. (1992)


\bibitem{KK}
G.~Kalai and D.~Kleitman.
``Quasi-polynomial Bounds for the Diameter of Graphs and Polyhedra''.
 Bull. Amer. Math. Soc., Vol.~26,  pp. 315-316. (1992)

 \bibitem{RK}
 R.M.~Karp.
``Reducibility Among Combinatorial Problems''.
In R.E.~Miller and J.W.~Thatcher (ed). Complexity of Computer Computations, Plenum Press, New york,
   pp.  85-103. (1972)

\bibitem{NK}
 N.K.~Karmarkar.
``A New Polynomial-time Algorithm for Linear Programming''.
 Combinatorica, Vol.~4,  pp. 373-395. (1984)

\bibitem{KS}
J.~Kelner and D.~Spielman.
``Polynomial-time Simplex Algorithm for Linear
Programming''.
ACM 38th Symposium on Theory of Computing,  pp. 51-60. (2006)

   \bibitem{LK}
 R.M.~Khachian.
``A Polynomial Algorithm in
Linear Programming''.
Doklady Akad. Nauk SSSR, Vol.~244, No~5, pp.  1093-1096.
 Translated in Soviet Math. Doklady, Vol.~20,  pp.  191-194. (1979)

\bibitem{KM}
 V.~Klee and G.J.~Minty.
``How Good is the Simplex Algorithm?''.
In Inequalities III, O. Shisha (ed.),
Academic Press, New York, pp.  159-175.  (1972)

 \bibitem{LH}
C.E.~Lemke, J.T.~Howson.
``Equilibrium points of bimatrix games''.
SIAM Journal on Applied Mathematics, Vol.~12, No~2, pp. 413-423. (1996)

 \bibitem{MSW}
J.~Matousek, M.~Sharir and E.~Welzl.
``A Subexponential Bound for Linear programming''.
Algorithmica, Vol.~16, No~4-5, pp. 498-516. (1996)

\bibitem{KGM}
 K.G.~Murty.
``Computational Complexity of Parametric Linear Programming''.
 Mathematical Programming, Vol.~19,  pp. 213-219. (1980)

\bibitem{Papa}
C.~Papadimitriou.
``Computational Complexity''. Addison Wesley, (1994)

\bibitem{CR}
C.~Ross.
``An Exponential Example for Terlaky's Pivoting Rule for the Criss-cross Simplex Method''.
 Mathematical Programming, Vol.~46,  pp. 78-94. (1990)

\bibitem{SF}
F.~Santos.
``A Counterexample to the Hirsch Conjecture''.
 Annals of Mathematics, Vol.~176, No~1,  pp. 383-412.  (2013)

\bibitem{SS}
 S.~Smale.
``On the Average Number of Steps of the Simplex Method of Linear Programming''.
 Mathematical Programming, Vol.~27,  pp. 241-267.  (1983)

\bibitem{ST}
D.A.~Spielman,  and S-H.~Teng.
``Smoothed Analysis of Algorithms: Why the Simplex
Algorithm Usually Takes Polynomial Time''.
 Journal of the ACM, Vol.~51,  No~3, pp. 385-463. (2004)

 \bibitem{ET}
 E.~Tardos.
``A Strongly Polynomial Algorithm to Solve Combinatorial Linear Programs''.
  Operations Research, Vol.~34,  pp. 250-256. (1986)

 \bibitem{SV}
R.~Savani and B.~von Stengel.
``Hard to Solve Bimatrix Games''.
Econometrica, Vol.~74, No~2, pp. 397-429. (2006)

 \bibitem{TT}
T.~Terlaky.
``A Convergent Criss-cross Method''.
 Optimization, Vol.~16,  pp. 683-690. (1990)

\bibitem{MT}
M.J.~Todd.
``Polynomial Expected Behvior of Pivoting Algorithm for Linear Complementarity and Linear Programming Problems''.
 Mathematical Programming, Vol.~35,  pp. 173-192. (1986)

 \bibitem{YYY}
 Y. ~Ye.
``The Simplex and Policy-Iteration Methods Are Strongly Polynomial for the Markov Decision Problem with a Fixed Discount Rate''.
 Mathematics of Operations Research, Vol.~36, Issue 4, pp. 593-603. (2011)

 \bibitem{SZ}
S.~Zionts.
``The Criss-cross Method for Solving Linear Programming Problems''.
 Management Science, Vol.~15,  pp. 426-445. (1979)

 \end{thebibliography}

 \end{document}